\theoremstyle{plain}
\newtheorem{theorem}{Theorem}
\newtheorem{lemma}{Lemma}
\theoremstyle{definition}
\theoremstyle{remark}
\date{}
\begin{document}

\title{Decision trees for binary subword-closed languages}
\author{Mikhail Moshkov\thanks{Computer, Electrical and Mathematical Sciences and Engineering Division,
King Abdullah University of Science and Technology (KAUST),
Thuwal 23955-6900, Saudi Arabia. Email: mikhail.moshkov@kaust.edu.sa.
}}
\maketitle

\begin{abstract}
In this paper, we study arbitrary subword-closed languages over the alphabet
$\{0,1\}$ (binary subword-closed languages). For the set of words $L(n)$ of
the length $n$ belonging to a binary subword-closed language $L$, we
investigate the depth of decision trees solving the recognition and the
membership problems deterministically and nondeterministically. In the case
of recognition problem, for a given word from $L(n)$, we should recognize it
using queries each of which, for some $i\in \{1,\ldots ,n\}$, returns the $i$th letter
of the word. In the case of
membership problem, for a given word over the alphabet $\{0,1\}$ of the
length $n$, we should recognize if it belongs to the set $L(n)$ using the same
queries. With the growth of $n$, the minimum depth of decision trees solving
the problem of recognition deterministically is either bounded from above by
a constant, or grows as a logarithm, or linearly. For other types of trees
and problems (decision trees solving the problem of recognition
nondeterministically, and decision trees solving the membership problem
deterministically and nondeterministically), with the growth of $n$, the
minimum depth of decision trees is either bounded from above by a constant
or grows linearly. We study joint behavior of minimum depths of the
considered four types of decision trees and describe five complexity classes
of binary subword-closed languages.
\end{abstract}

{\it Keywords}: subword-closed language, recognition problem, membership problem,  deterministic decision tree, nondeterministic decision tree.

\section{Introduction}
\label{sect1}

In this paper, we study arbitrary binary languages (languages over the
alphabet $E=\{0,1\}$) that are subword-closed: if a word $%
w_{1}u_{1}w_{2}\cdots w_{m}u_{m}w_{m+1}$ belongs to a language, then the
word $u_{1}\cdots u_{m}$ belongs to this language \cite%
{Brzozowski14,Haines67,Okhotin10}.

For the set of words $L(n)$ of the length $n$ belonging to a binary
subword-closed language $L$, we investigate the depth of decision trees
solving the recognition and the membership problems deterministically and
nondeterministically. In the case of recognition problem, for a given word
from $L(n)$, we should recognize it using queries each of which, for some $i\in \{1,\ldots ,n\}$, returns the $i$th letter
of the word. In the case of membership problem, for
a given word over the alphabet $E$ of the length $n$, we should recognize if
it belongs to $L(n)$ using the same queries.

For an arbitrary binary subword-closed language, with the growth of $n$, the
minimum depth of decision trees solving the problem of recognition
deterministically is either bounded from above by a constant, or grows as a
logarithm, or linearly. For other types of trees and problems (decision
trees solving the problem of recognition nondeterministically, and decision
trees solving the membership problem deterministically and
nondeterministically), with the growth of $n$, the minimum depth of decision
trees is either bounded from above by a constant, or grows linearly. We
study joint behavior of minimum depths of the considered four types of
decision trees and describe five complexity classes of binary subword-closed
languages.

In \cite{Moshkov97}, the following results were announced without proofs.
For an arbitrary regular language, with the growth of $n$, (i) the minimum
depth of decision trees solving the problem of recognition deterministically
is either bounded from above by a constant, or grows as a logarithm, or
linearly, and (ii) the minimum depth of decision trees solving the problem of
recognition nondeterministically is either bounded from above by a constant,
or grows linearly. Proofs for the case of decision trees solving the problem
of recognition deterministically can be found in \cite{Moshkov00,Moshkov05}.
To apply the considered results to a given regular language, it is necessary to know
a deterministic finite automaton (DFA) accepting this language.

Each subword-closed language over a finite alphabet is a regular language
\cite{Haines67}. In this paper, we do not assume that binary subword-closed languages are
given by DFAs. So we cannot use the results from \cite{Moshkov97,Moshkov00,Moshkov05}. Instead of
this, for binary subword-closed languages, we describe simple criteria for
the behavior of minimum depths of decision trees solving the problems of
recognition and membership deterministically and nondeterministically.

The rest of the paper is organized as follows. In Section \ref{sect2}, we
consider main notions, in Section \ref{sect3} -- main results, and in
Section \ref{sect4} -- proofs.

\section{Main Notions}
\label{sect2}

Let $\omega =\{0,1,2,\ldots \}$ be the set of nonnegative integers and $%
E=\{0,1\}$. By $E^{\ast }$ we denote the set of all finite words over the
alphabet $E$, including the empty word $\lambda $. Any subset $L$ of the set
$E^{\ast }$ is called a binary language. This language is called
subword-closed if, for any word $w_{1}u_{1}w_{2}\cdots w_{m}u_{m}w_{m+1}$
belonging to $L$, the word $u_{1}\cdots u_{m}$ belongs to $L$, where $w_{i}$%
, $u_{j}$ $\in E^{\ast }$, $i=1,\ldots ,m+1$, $j=1,\ldots ,m$. For any
natural $n$, we denote by $L(n)$ the set of words from $L$, which length is
equal to $n$. We consider two problems related to the set $L(n)$. The
problem of recognition: for a given word from $L(n)$, we should recognize it
using attributes (queries) $l_{1}^{n},\ldots ,l_{n}^{n}$, where $l_{i}^{n}$, $i\in
\{1,\ldots ,n\}$, is a function from $E^{\ast }(n)$ to $E$ such that $%
l_{i}^{n}(a_{1}\cdots a_{n})=a_{i}$ for any word $a_{1}\cdots a_{n}\in
E^{\ast }(n)$. The problem of membership: for a given word from $E^{\ast
}(n) $, we should recognize if this word belongs to the set $L(n)$ using the
same attributes. To solve these problems, we use decision trees over $L(n)$.

A decision tree over $L(n)$ is a marked finite directed tree with root,
which has the following properties:

\begin{itemize}
\item The root and the edges leaving the root are not labeled.

\item Each node, which is not the root nor terminal node, is labeled with an
attribute from the set $\{l_{1}^{n},\ldots ,l_{n}^{n}\}$.

\item Each edge leaving a node, which is not a root, is labeled with a
number from $E$.
\end{itemize}

A decision tree over $L(n)$ is called deterministic if it satisfies the
following conditions:

\begin{itemize}
\item Exactly one edge leaves the root.

\item For any node, which is not the root nor terminal node, the edges
leaving this node are labeled with pairwise different numbers.
\end{itemize}

Let $\Gamma $ be a decision tree over $L(n)$. \textit{A} complete path in $%
\Gamma $ is any sequence $\xi =v_{0},e_{0},\ldots ,v_{m},$ $e_{m},v_{m+1}$ of
nodes and edges of $\Gamma $ such that $v_{0}$ is the root, $v_{m+1}$ is a
terminal node, and $v_{i}$ is the initial and $v_{i+1}$ is the terminal
node of the edge $e_{i}$ for $i=0,\ldots ,m$. We define a subset $E(n,\xi
) $ of the set $E^{\ast }(n)$ in the following way: if $m=0$, then $E(n,\xi
)=E^{\ast }(n)$. Let $m>0$, the attribute $l_{i_{j}}^{n}$ be assigned to the
node $v_{j}$ and $b_{j}$ be the number assigned to the edge $e_{j}$, $%
j=1,\ldots ,m$. Then
\[
E(n,\xi )=\{a_{1}\cdots a_{n}\in E^{\ast }(n):a_{i_{1}}=b_{1},\ldots
,a_{i_{m}}=b_{m}\}.
\]

Let $L(n)\neq \emptyset $. We say that a decision tree $\Gamma $ over $L(n)$
solves the problem of recognition for $L(n)$ nondeterministically if $\Gamma
$ satisfies the following conditions:

\begin{itemize}
\item Each terminal node of $\Gamma $ is labeled with a word from $L(n)$.

\item For any word $w\in L(n)$, there exists a complete path $\xi $ in the
tree $\Gamma $ such that $w\in E(n,\xi )$.

\item For any word $w\in L(n)$ and for any complete path $\xi $ in the tree $%
\Gamma $ such that $w\in E(n,\xi )$, the terminal node of the path $\xi $ is
labeled with the word $w$.
\end{itemize}

We say that a decision tree $\Gamma $ over $L(n)$ solves the problem of
recognition for $L(n)$ deterministically if $\Gamma $ is a deterministic
decision tree, which solves the problem of recognition for $L(n)$
nondeterministically.

We say that a decision tree $\Gamma $ over $L(n)$ solves the problem of
membership for $L(n)$ nondeterministically if $\Gamma $ satisfies the
following conditions:

\begin{itemize}
\item Each terminal node of $\Gamma $ is labeled with a number from $E$.

\item For any word $w\in E^{\ast }(n)$, there exists a complete path $\xi $
in the tree $\Gamma $ such that $w\in E(n,\xi )$.

\item For any word $w\in E^{\ast }(n)$ and for any complete path $\xi $ in
the tree $\Gamma $ such that $w\in E(n,\xi )$, the terminal node of the path
$\xi $ is labeled with the number $1$ if $w\in L(n)$ and with the number $0$%
, otherwise.
\end{itemize}

We say that a decision tree $\Gamma $ over $L(n)$ solves the problem of
membership for $L(n)$ deterministically if $\Gamma $ is a deterministic
decision tree which solves the problem of membership for $L(n)$
nondeterministically.

Let $\Gamma $ be a decision tree over $L(n)$. We denote by $h(\Gamma )$ the
maximum number of nodes in a complete path in $\Gamma $ that are not the
root nor terminal node. The value $h(\Gamma )$ is called the depth of the
decision tree $\Gamma $.

We denote by $h_{L}^{ra}(n)$ ($h_{L}^{rd}(n)$) the minimum depth of a
decision tree, which solves the problem of recognition for $L(n)$
nondeterministically (deterministically). If $L(n)=\emptyset $, then $%
h_{L}^{ra}(n)=h_{L}^{rd}(n)=0$.

We denote by $h_{L}^{ma}(n)$ ($h_{L}^{md}(n)$) the minimum depth of a
decision tree, which solves the problem of membership for $L(n)$
nondeterministically (deterministically). If $L(n)=\emptyset $, then $%
h_{L}^{ma}(n)=h_{L}^{md}(n)=0$.

\section{Main Results}
\label{sect3}

Let $L$ be a binary subword-closed language. For any $a\in E$ and $i\in
\omega $, we denote by $a^{i}$ the word $a\cdots a$ of the length $i$ (if $%
i=0$, then $a^{i}=\lambda $). For any $a\in E\,$, let $\bar{a}=1$ if $a=0$
and $\bar{a}=0$ if $a=1$.

We define the parameter $Hom(L)$ of the language $L$, which is called the
homogeneity dimension of the language $L$. If for each natural number $m$,
there exists $a\in E$ such that the word $a^{m}\bar{a}a^{m}$ belongs to $L$,
then $Hom(L)=\infty $. Otherwise, $Hom(L)$ is the maximum number $m\in
\omega $ such that there exists $a\in E$ for which the word $a^{m}\bar{a}%
a^{m}$ belongs to $L$. If $L=\emptyset $, then $Hom(L)=0$.

We now define the parameter $Het(L)$ of the language $L$, which is called
the heterogeneity dimension of the language $L$. If for each natural number $%
m$, there exists $a\in E$ such that the word $a^{m}\bar{a}^{m}$ belongs to $%
L $, then $Het(L)=\infty $. Otherwise, $Het(L)$ is the maximum number $m\in
\omega $ such that there exists $a\in E$ for which the word $a^{m}\bar{a}%
^{m} $ belongs to $L$. If $L=\emptyset $, then $Het(L)=0$.

\begin{theorem}
\label{T1} Let $L$ be a binary subword-closed language.

(a) If $Hom(L)=\infty $, then $h_{L}^{rd}(n)=\Theta (n)$ and $%
h_{L}^{ra}(n)=\Theta (n)$.

(b) If $Hom(L)<\infty $ and $Het(L)=\infty $, then $h_{L}^{rd}(n)=\Theta
(\log n)$ and $h_{L}^{ra}(n)=O(1)$.

(c) If $Hom(L)<\infty $ and $Het(L)<\infty $, then $h_{L}^{rd}(n)=O(1)$ and $%
h_{L}^{ra}(n)=O(1)$.
\end{theorem}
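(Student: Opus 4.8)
My plan is to reduce the whole theorem to two ingredients: explicit families of words in $L(n)$ that force the lower bounds, and a structural description of $L$ under the hypotheses $Hom(L)<\infty$ and $Het(L)<\infty$ that forces the upper bounds. Two trivial remarks are used throughout: every word of $L(n)$ is identified after reading all $n$ letters, so $h_L^{rd}(n)\le n$ and $h_L^{ra}(n)\le n$; and every deterministic tree is by definition a nondeterministic one, so $h_L^{rd}(n)\ge h_L^{ra}(n)$. I will also use the reformulation, immediate from the definitions, that $h_L^{ra}(n)$ equals the maximum over $w\in L(n)$ of the least size of a set $S$ of positions such that no other word of $L(n)$ agrees with $w$ on all positions of $S$ (a \emph{certificate} for $w$).

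The heart of the argument is a structural lemma. I would prove that a binary subword-closed language is a finite union of \emph{products} $C_1\cdots C_k$, where each factor $C_i$ is one of $0^*,1^*,E^*,(0+\lambda),(1+\lambda)$; this is the standard decomposition of downward-closed languages, and over the two-letter alphabet it is elementary. Reading off $Hom$ and $Het$ from this form is then routine: a factor $E^*$, or two $0^*$-factors with a $1$-producing factor between them, or two $1^*$-factors with a $0$-producing factor between them, each yields $a^m\bar aa^m\in L$ for all $m$, so $Hom(L)<\infty$ forces every product to contain \emph{at most two} star factors, appearing in monotone order ($0^*$ before $1^*$, or $1^*$ before $0^*$); and two unbounded star factors of opposite letters yield $a^m\bar a^m\in L$ for all $m$, so $Hom(L)<\infty$ together with $Het(L)<\infty$ forces \emph{at most one} star factor per product. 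The consequences I need are: (i) if $Hom(L)<\infty$ then $L(n)$ is a union of $O(1)$ families, each either a single word or a one-parameter family $u_0 0^{i}u_1 1^{\,n-c-i}u_2$ (or its decreasing analogue) with $u_0,u_1,u_2$ of bounded total length $c$, so each word is monotone up to an $O(1)$ transition glitch and is determined by the single boundary parameter $i$; and (ii) if in addition $Het(L)<\infty$ then each product has at most one star, whence $|L(n)|=O(1)$.

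For the lower bounds I exhibit words. If $Hom(L)=\infty$ then, since one letter $a$ recurs for infinitely many $m$ and $L$ is subword-closed, we may assume $0^m10^m\in L$ for all $m$; hence $0^n\in L(n)$ and $0^{i}10^{\,n-1-i}\in L(n)$ for every $i$. Each such word differs from $0^n$ only in position $i+1$, so any certificate for $0^n$ must query every position, giving $h_L^{ra}(n)\ge n$; with $h_L^{rd}(n)\ge h_L^{ra}(n)$ and the trivial upper bound this proves part (a). If $Het(L)=\infty$ then likewise $0^m1^m\in L$ for all $m$, so $0^*1^*\subseteq L$ and $L(n)$ contains the $n+1$ monotone words $0^{i}1^{\,n-i}$; a deterministic tree of depth $d$ has at most $2^{d}$ terminal nodes, so $d\ge\log_2(n+1)$, giving $h_L^{rd}(n)=\Omega(\log n)$, the lower half of part (b).

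For the upper bounds I use structure (i)--(ii). Under $Hom(L)<\infty$ and $Het(L)=\infty$: deterministically, read the $O(1)$ positions carrying the bounded shape data of each family, then binary-search for the transition of the monotone middle; since the glitch $u_1$ has bounded length, $O(\log n)$ queries locate the boundary to within $O(1)$ positions and $O(1)$ further queries resolve it, so $h_L^{rd}(n)=O(\log n)$, completing part (b). Nondeterministically, one \emph{guesses} the boundary and verifies it by reading only the $O(1)$ positions around the transition together with the bounded shape data; this $O(1)$-size set is a certificate because any two distinct words of $L(n)$ differ either in their shape data or near their transition, so $h_L^{ra}(n)=O(1)$. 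Under $Hom(L)<\infty$ and $Het(L)<\infty$ we have $|L(n)|=O(1)$ by (ii), and finitely many words are separated by finitely many positions, so both $h_L^{rd}(n)$ and $h_L^{ra}(n)$ are $O(1)$, which is part (c). The main obstacle is the structural lemma of the second paragraph, specifically the precise one-parameter description in (i) and the verification that $O(1)$ positions near the transition certify a word against all of $L(n)$; once this is in hand, the tree constructions and the counting lower bounds are routine.
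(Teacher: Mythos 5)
Your proposal is correct in outline and shares the paper's overall architecture -- the same witness families for the lower bounds ($a^n$ together with $a^i\bar aa^{n-1-i}$ for part (a); the $n+1$ words $a^i\bar a^{n-i}$ and a leaf-counting argument for the $\Omega(\log n)$ bound in part (b)), the same binary search for the transition point, the same certificate characterization of $h_L^{ra}$, and the same ``finitely many words are separated by $O(1)$ positions'' argument for part (c). Where you genuinely diverge is in how you obtain the structural fact that drives the upper bounds. The paper proves its Lemma 1 from scratch: classifying each occurrence of a letter $a$ as an $l$-entry or $r$-entry (at most $Hom(L)$ letters $\bar a$ to its left, respectively right) and running a case analysis over the nine resulting word types to conclude that every $w\in L$ has the form $w_1a^iw_2\bar a^jw_3$ with $|w_k|\le 2Hom(L)$; its Lemma 2 then counts such words to get $|L(n)|=O(1)$ under the additional hypothesis $Het(L)<\infty$. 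You instead invoke the classical ideal decomposition of downward-closed languages into finite unions of products of atoms $0^*,1^*,E^*,(0+\lambda),(1+\lambda)$ and read off the constraints that $Hom(L)<\infty$ and $Het(L)<\infty$ impose on the factors. Both routes are sound; yours buys brevity and a conceptual explanation (the parameters $Hom$ and $Het$ literally count star factors), at the cost of importing a nontrivial external theorem that the paper deliberately avoids so as to stay self-contained, which matters since the paper's stated point is that the languages are not assumed to be given by automata or expressions. You correctly flag the one place where real care is still needed: showing that the $O(1)$ positions (bounded shape data at both ends plus a window around the transition) certify $w$ against \emph{every} other word of $L(n)$, including words coming from a different product with the opposite monotonicity or a shifted transition; the paper's explicit $7t$-position certificate, using full blocks $L_1,L_2,R_2,R_1$ plus three consecutive blocks around the transition, is exactly the bookkeeping that closes this, and your sketch would need the analogous padding to go through.
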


For a binary subword-closed language $L$ we denote by $L^{C}$ its
complementary language $E^{\ast }\setminus L$. The notation $|L|=\infty $
means that $L$ is an infinite language, and the notation $|L|<\infty $ means
that $L$ is a finite language.

\begin{theorem}
\label{T2} Let $L$ be a binary subword-closed language.

(a) If $|L|=\infty $ and $L^{C}\neq \emptyset $, then $h_{L}^{md}(n)=\Theta
(n)$ and $h_{L}^{ma}(n)=\Theta (n)$.

(b) If $|L|<\infty $ or $L^{C}=\emptyset $, then $h_{L}^{md}(n)=O(1)$ and $%
h_{L}^{ma}(n)=O(1)$.
\end{theorem}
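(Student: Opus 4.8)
The plan is to separate upper and lower bounds and to reduce everything to a single lower-bound estimate. For both parts the upper bound is immediate: the tree that queries all $n$ attributes solves the membership problem deterministically with depth $n$, so $h_{L}^{md}(n)\le n$ and $h_{L}^{ma}(n)\le n$. Moreover every deterministic tree is a nondeterministic one, so $h_{L}^{ma}(n)\le h_{L}^{md}(n)$. Hence, for part (a) it suffices to prove the single lower bound $h_{L}^{ma}(n)=\Omega(n)$: this forces $h_{L}^{md}(n)=\Omega(n)$ as well, and both are $O(n)$.

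Part (b) I would dispose of directly. If $L^{C}=\emptyset$ then $L=E^{\ast}$, so $L(n)=E^{\ast}(n)$ and a single terminal node labeled $1$ solves membership with depth $0$. If $|L|<\infty$, fix $N$ so that every word of $L$ has length at most $N$; then $L(n)=\emptyset$ for $n>N$, whence $h_{L}^{md}(n)=h_{L}^{ma}(n)=0$, while for the finitely many $n\le N$ the depth is at most $n\le N$. In either case both quantities are bounded by a constant.

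The core is the lower bound in part (a), and the main obstacle is to show that $L(n)$ is \emph{small}, i.e.\ polynomially bounded in $n$. Since $L^{C}\neq\emptyset$, fix a word $v\notin L$ of length $t$; as $L$ is subword-closed, no member of $L$ contains $v$ as a (scattered) subword. I would prove that the number of words in $E^{\ast}(n)$ avoiding $v$ as a subword is at most $\sum_{j=0}^{t-1}\binom{n}{j}$. Scan such a word from left to right while greedily matching $v$, keeping a pointer $p$ to the next unmatched letter of $v$; avoidance means the pointer advances fewer than $t$ times. At every position the letter is forced — it equals $v_{p}$ at the at most $t-1$ positions where the pointer advances, and equals the opposite letter everywhere else — so the whole word is determined by the set of advancing positions, a subset of $\{1,\dots,n\}$ of size at most $t-1$. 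Therefore $|L(n)|\le\sum_{j=0}^{t-1}\binom{n}{j}=O(n^{t-1})$.

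Finally I would connect nondeterministic depth to a covering of $L(n)$ by large monochromatic subcubes. Since $|L|=\infty$ and $L$ is subword-closed, some letter $a$ satisfies $a^{k}\in L$ for all $k$ (an infinite subword-closed language contains words of unbounded length, each dominated by one letter, whose powers are then subwords of members of $L$), so $a^{n}\in L(n)$ and $L(n)\neq\emptyset$. Let $\Gamma$ solve membership nondeterministically with $h(\Gamma)=d$, and apply the defining conditions to the input $w=a^{n}\in L(n)$: there is a complete path $\xi$ with $w\in E(n,\xi)$ whose terminal is labeled $1$, and by the correctness condition every word of $E(n,\xi)$ lies in $L(n)$. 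But $E(n,\xi)$ is a subcube of $E^{\ast}(n)$ whose codimension is at most the number of attributes on $\xi$, hence at most $d$, so it contains at least $2^{n-d}$ words, all in $L(n)$. Combining with the counting bound gives $2^{n-d}\le|L(n)|\le\sum_{j=0}^{t-1}\binom{n}{j}$, so $n-d\le(t-1)\log_{2}n+\log_{2}t$ and therefore $d\ge n-(t-1)\log_{2}n-\log_{2}t=\Omega(n)$. This yields $h_{L}^{ma}(n)=\Theta(n)$, and by the reductions above $h_{L}^{md}(n)=\Theta(n)$, completing (a). The two delicate points, which I would state as separate lemmas, are the counting estimate for subword-avoidance and the extraction of a large $L(n)$-monochromatic subcube from the nondeterministic acceptance conditions.
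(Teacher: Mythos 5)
Your proof is correct, but the key lower bound in part (a) is obtained by a genuinely different route than the paper's. Both arguments start the same way: reduce everything to $h_{L}^{ma}(n)=\Omega(n)$, take a word $w\in L(n)$ and an accepting path $\xi$ with $w\in E(n,\xi)$, and observe that correctness forces the entire subcube $E(n,\xi)$ (of codimension at most the depth $d$) to lie inside $L(n)$. At that point the paper argues directly: fixing a shortest word $w_{0}\in L^{C}$, if $d\le n-|w_{0}|$ there are at least $|w_{0}|$ free coordinates on $\xi$, and planting the letters of $w_{0}$ into them produces a word of $E(n,\xi)$ containing $w_{0}$ as a scattered subword, hence outside $L$ --- a contradiction giving the sharp bound $h_{L}^{ma}(n)>n-|w_{0}|$. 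You instead rule out a large $1$-labeled subcube by cardinality: your counting lemma (the greedy-matching injection showing that at most $\sum_{j=0}^{t-1}\binom{n}{j}$ binary words of length $n$ avoid a fixed length-$t$ word as a subsequence) gives $|L(n)|=O(n^{t-1})$, and comparing with $|E(n,\xi)|\ge 2^{n-d}$ yields $d\ge n-O(\log n)$. Your route costs an extra lemma and a slightly weaker constant-order bound, but the counting estimate is correct (the binary alphabet is what makes the non-advancing letters forced) and is of independent interest; the paper's planting argument is shorter and tighter. Part (b) and all upper bounds coincide with the paper's treatment.
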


To study all possible types of joint behavior of functions $h_{L}^{rd}(n)$, $%
h_{L}^{ra}(n)$, $h_{L}^{md}(n)$, and $h_{L}^{ma}(n)$ for binary
subword-closed languages $L$, we consider five classes of languages $%
\mathcal{L}_{1},\ldots ,\mathcal{L}_{5}$ described in the columns 2--5 of
Table \ref{tab1}. In particular, $\mathcal{L}_{1}$ consists of all binary
subword-closed languages $L$ with $Hom(L)=\infty \ $and $L^{C}\neq
\emptyset $. It is easy to show that the complexity classes $\mathcal{L}_{1},\ldots ,%
\mathcal{L}_{5}$ are pairwise disjoint, and each binary subword-closed
language belongs to one of these classes. The behavior of functions $%
h_{L}^{rd}(n)$, $h_{L}^{ra}(n)$, $h_{L}^{md}(n)$, and $h_{L}^{ma}(n)$ for
languages from these classes is described in the last four
columns of  Table \ref{tab1}. For each class, the results considered in Table \ref{tab1} follow  from Theorems \ref{T1} and \ref{T2}, and the following three  remarks: (i) from the
condition $Hom(L)=\infty $ it follows $|L|= \infty$, (ii) from the
condition $Het(L)=\infty $ it follows $|L|= \infty$, and
(iii) from the
condition $Hom(L)<\infty $ it follows $L^{C}\neq \emptyset $.

\begin{table}[h]
\caption{Joint behavior of functions $h_{L}^{rd}$, $%
h_{L}^{ra}$, $h_{L}^{md}$, and $h_{L}^{ma}$ for binary
subword-closed languages}
\label{tab1}\center
\begin{tabular}{|l|llll|llll|}
\hline
& $Hom(L)$ & $Het(L)$ & $|L|$ & $L^{C}$ & $h_{L}^{rd}$ & $h_{L}^{ra}$
& $h_{L}^{md}$ & $h_{L}^{ma}$ \\ \hline
$\mathcal{L}_{1}$ & $=\infty $ &  &  & $\neq \emptyset $ & $\Theta (n)$ & $%
\Theta (n)$ & $\Theta (n)$ & $\Theta (n)$ \\
$\mathcal{L}_{2}$ & $=\infty $ &  &  & $=\emptyset $ & $\Theta (n)$ & $%
\Theta (n)$ & $O(1)$ & $O(1)$ \\
$\mathcal{L}_{3}$ & $<\infty $ & $=\infty $ &  &  & $\Theta (\log n)$ & $%
O(1) $ & $\Theta (n)$ & $\Theta (n)$ \\
$\mathcal{L}_{4}$ & $<\infty $ & $<\infty $ & $=\infty $ &  & $O(1)$ & $O(1)$
& $\Theta (n)$ & $\Theta (n)$ \\
$\mathcal{L}_{5}$ & $<\infty $ & $<\infty $ & $<\infty $ &  & $O(1)$ & $O(1)$
& $O(1)$ & $O(1)$\\ \hline%
\end{tabular}
\end{table}

We now show that the classes $\mathcal{L}_{1},\ldots ,\mathcal{L}_{5}$ are
nonempty. To this end, we consider the following five binary subword-closed
languages:%
\begin{eqnarray*}
L_{1} &=&\{0^{i}10^{j},0^{i}:i,j\in \omega \}, \\
L_{2} &=&E^{\ast }, \\
L_{3} &=&\{0^{i}1^{j}:i,j\in \omega \}, \\
L_{4} &=&\{0^{i}:i\in \omega \}, \\
L_{5} &=&\{0\}.
\end{eqnarray*}

It is easy to see that $L_{i}\in \mathcal{L}_{i}$ for $i=1,\ldots ,5$.

\section{Proofs of Theorems \protect\ref{T1} and \protect\ref{T2}}
\label{sect4}

In this section, we prove Theorems \ref{T1} and \ref{T2}. First, we consider
two auxiliary statements. For a word $w,$ we denote by $|w|$ its length.

\begin{lemma}
\label{L1} Let $L$ be a binary subword-closed language for which $%
Hom(L)<\infty $. Then any word $w$ from $L$ can be represented in the form
\begin{equation}
w_{1}a^{i}w_{2}\bar{a}^{j}w_{3},  \label{E}
\end{equation}%
where $a\in E$, $i,j\in \omega $, and $w_{1}$, $w_{2}$, $w_{3}$ are words
from $E^{\ast }$ with length at most $2Hom(L)$ each.
\end{lemma}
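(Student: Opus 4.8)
The plan is to reduce the statement to a purely combinatorial fact about binary words and then prove that fact by analysing the block (maximal-run) structure. Write $M=Hom(L)$. Since $L$ is subword-closed and $Hom(L)=M$, no word of $L$ can contain $a^{M+1}\bar a a^{M+1}$ as a subword for either $a\in E$, because such a subword would itself lie in $L$ and force $Hom(L)\ge M+1$. So it suffices to prove the following: \emph{every binary word $w$ that contains neither $0^{M+1}10^{M+1}$ nor $1^{M+1}01^{M+1}$ as a subword admits a representation of the form} $w_1a^iw_2\bar a^jw_3$ (that is, the form \eqref{E}, allowing $i=0$ or $j=0$) \emph{with $|w_1|,|w_2|,|w_3|\le 2M$}. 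Throughout I call a maximal run of equal letters a \emph{block}, and a block \emph{large} if its length is at least $M+1$.

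First I would establish two structural facts. The backbone is that \emph{for each letter there is at most one large block}: two large blocks of the same letter $a$ are separated by at least one $\bar a$ (blocks alternate), producing the forbidden subword $a^{M+1}\bar a a^{M+1}$. Hence $w$ has at most two large blocks, and if it has two they carry different letters. The quantitative engine is a counting bound: \emph{if every block of letter $\bar c$ has length at most $M$, then $w$ contains at most $3M$ occurrences of $\bar c$}. To prove this I would list the occurrences of $c$ and track $f_k=$ (number of $\bar c$'s preceding the $k$-th $c$). This sequence is nondecreasing with steps of size at most $M$ (an intervening $\bar c$-block), starts at a value $\le M$, and ends at a value $\ge p-M$, where $p$ is the total number of $\bar c$'s. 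Avoiding $\bar c^{M+1}c\bar c^{M+1}$ forces every $f_k$ to avoid the interval $[M+1,\,p-M-1]$; since a single step of length $\le M$ must jump across this interval, we get $p-2M\le M$, i.e. $p\le 3M$ (the cases of zero or one occurrence of $c$ give the even smaller bounds $M$ and $2M$).

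Next I would assemble the representation by cases on the number of large blocks. If there are two, say a large $a$-block $B_1$ to the left of a large $\bar a$-block $B_2$, I take $B_1,B_2$ as $a^i,\bar a^j$; anchoring the two forbidden patterns at $B_1$ and $B_2$ bounds the number of each letter in each of the three gaps $\pi,\mu,\sigma$ by $M$, so $|\pi|,|\mu|,|\sigma|\le 2M$. If there is exactly one large block $B$ of letter $c$, the counting bound gives at most $3M$ symbols $\bar c$ in total, and anchoring $c^{M+1}\bar c c^{M+1}$ at $B$ gives at most $M$ symbols $c$ on each side of $B$; moreover both sides cannot have $\ge M+1$ symbols $\bar c$ (that would give $\bar c^{M+1}c\bar c^{M+1}$), so one side has length $\le 2M$. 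Choosing the distinguished letter $a\in\{c,\bar c\}$ so that this short side becomes a single outer word and $B$ the neighbouring run (the opposite run empty) leaves a remaining block of length $\le 4M$ to be cut into two pieces of length $\le 2M$. If there is no large block, both letters occur at most $3M$ times, so $|w|\le 6M$ and $w$ itself splits into three pieces of length $\le 2M$ with $i=j=0$.

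I expect the main obstacle to be exactly the case of fewer than two large blocks. There the naive choice of ``large blocks as the two runs'' breaks down, because the junk region flanking a single (or absent) large block can have length well above $2M$; the resolution is to bound the minority symbols by the counting lemma and then cut a bounded junk region into pieces. The delicate point is to make every resulting piece land at or below the exact threshold $2M$, which is why the correct choice of the distinguished letter $a$, and of which of the two runs is taken empty, matters.
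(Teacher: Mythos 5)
Your proof is correct, and it takes a genuinely different route from the paper's. The paper classifies each occurrence of each letter $a$ in $w$ as an $l$-entry or an $r$-entry (at most $m$ letters $\bar a$ to its left, respectively to its right; every occurrence is one or the other because $a^{m+1}\bar a a^{m+1}\notin L$), sorts $w$ into one of nine types according to whether all occurrences of $0$ (resp.\ $1$) are $l$-entries, all are $r$-entries, or neither, and then, in four groups of cases, explicitly locates boundary positions that carve out $w_{1},a^{i},w_{2},\bar a^{j},w_{3}$. You instead organize the argument around maximal runs: at most one run of each letter can exceed length $M$, a counting lemma bounds by $3M$ the number of occurrences of a letter all of whose runs are short, and the case split is simply on the number of long runs ($0$, $1$, or $2$). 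I checked your quantitative claims and they hold: the step function $f_k$ argument is sound (the forbidden pattern $\bar c^{\,M+1}c\,\bar c^{\,M+1}$ does force each $f_k$ out of $[M+1,p-M-1]$, and a step of size at most $M$ must bridge the gap, giving $p\le 3M$); the three gaps in the two-long-run case each have at most $M$ of each letter; and the $2M+4M$ and $6M$ budgets in the remaining cases split into pieces of length at most $2M$, matching the paper's bound $2Hom(L)$ exactly. What your version buys is modularity -- the counting lemma isolates the only quantitative step, and taking $a^{i},\bar a^{j}$ to be the (at most two) long runs makes the shape of (\ref{E}) transparent -- at the cost of the delicate junk-cutting in the one-long-run case that you correctly flag. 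The paper's nine-type analysis is longer but each case hands over the decomposition positions directly.
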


\begin{proof}
Denote $m=$ $Hom(L)$. Then the words $0^{m+1}10^{m+1}$ and $1^{m+1}01^{m+1}$
do not belong to $L$. Let $w$ be a word from $L$. Then, for any $a\in E$,
any entry of the letter $a$ in $w$ has at most $m$ $\bar{a}$s to the left of
this entry (we call it $l$-entry of $a$) or at most $m$ $\bar{a}$s to the
right of this entry (we call it $r$-entry of $a$). Let $a\in E$. We say that
$w$ is (i) $a$-$l$-word if any entry of $a$ in $w$ is $l$-entry; (ii) $a$-$r$%
-word if any entry of $a$ in $w$ is $r$-entry; and (iii) $a$-$b$-word if $w$
is not $a$-$l$-word and is not $a$-$r$-word. Let $c,d\in \{l,r,b\}$. We say
that $w$ is $cd$-word if $w$ is $0$-$c$-word and $1$-$d$-word. There are
nine possible pairs $cd$. We divide them into four groups: (a) $ll$ and $rr$%
, (b) $lr$ and $rl$, (c) $lb$, $rb$, $bl$, and $br$, and (d) $bb$, and
consider them separately. Let $$w=a_{1}\cdots a_{n}.$$ We assume that $w$
contains both 0s and 1s. Otherwise, $w$ can be represented in the form (\ref%
{E}).

(a) Let $w$ be $ll$-word. Let $a_{n}=0$ and $a_{i}$ be the rightmost entry
of $1$ in $w$. Since $w$ is $ll$-word, there are at most $m$ 1s to the left
of $a_{n}$ and at most $m$ 0s to the left of $a_{i}$. Denote $%
w_{1}=a_{1}\cdots a_{i}$. Then $w_{1}$ contains at most $m$ 0s and at most $m
$ 1s, i.e., the length of $w_{1}$ is at most $2m$. Moreover, to the right of
$a_{i}$ there are only 0s. Thus, $w=w_{1}0^{n-i}$, where $|w_{1}|=i\leq 2m$,
i.e, $w$ can be represented in the form (\ref{E}).

Let $a_{n}=1$ and $a_{i}$ be the rightmost entry of $0$ in $w$. Denote $%
w_{1}=a_{1}\cdots a_{i}$. Then $w_{1}$ contains at most $m$ 0s and at most $m
$ 1s, i.e., $|w_{1}|\leq 2m$. Moreover, to the right of $a_{i}$ there are
only 1s. Thus, $w=w_{1}1^{n-i}$, i.e, $w$ can be represented in the form (%
\ref{E}).

One can prove in a similar way that any $rr$-word can be represented in the
form (\ref{E}).

(b) Let $w$ be $lr$-word, $a_{i}$ be the rightmost entry of 0 and $a_{j}$ be
the leftmost entry of 1. Then either $j=i+1$ or $j<i$. Let $j=i+1$. Then $%
w=0^{i}1^{n-i}$, i.e., $w$ can be represented in the form (\ref{E}). Let now
$j<i$. Denote $w_{2}=a_{j}\cdots a_{i}$. The word $w$ has at most $m$ 0s to
the right of $a_{j}$ and at most $m$ 1s to the left of $a_{i}$. Therefore  $%
|w_{2}|\leq 2m$ and $w=0^{j-1}w_{2}1^{n-i}$, i.e., $w$ can be represented in
the form (\ref{E}).

One can prove in a similar way that any $rl$-word can be represented in the
form (\ref{E}).

(c) Let $w$ be $lb$-word, $a_{i}$ be the rightmost entry of 1 such that to
the left of this entry we have at most $m$ 0s and $a_{j}$ be the next after $%
a_{i}$ entry of 1. It is clear that to the right of $a_{j}$ there are at
most $m$ 0s, $j\geq i+2$, and all letters $a_{i+1},\ldots ,a_{j-1}$ are
equal to 0. Let $a_{k}$ be the rightmost entry of 0. Then to the left of $a_{k}$
there are at most $m$ 1s. It is clear that either $k=j-1$ or $k>j$. Denote $%
w_{1}=a_{1}\cdots a_{i}$. Then $|w_{1}|\leq 2m$. Let $k=j-1$. In this case, $%
w=w_{1}0^{j-i-1}1^{n-j+1}$, i.e., $w$ can be represented in the form (\ref{E}).
Let $k>j$. Denote $w_{2}=a_{j}\cdots a_{k}$. Then $|w_{2}|\leq 2m$. We have $%
w=w_{1}0^{j-i-1}w_{2}1^{n-k}$, i.e., $w$ can be represented in the form (\ref%
{E}).

One can prove in a similar way that any $rb$- or $bl$-, or $br$-word can be
represented in the form (\ref{E}).

(d) Let $w$ be $bb$-word, $a_{i}$ be the rightmost entry of 0 such that
there are at most $m$ 1s to the left of this entry and $a_{j}$ be the next
after $a_{i}$ entry of 0. Then there are at most $m$ 1s to the right of $%
a_{j}$, $j\geq i+2$, and $w=a_{1}\cdots a_{i}1\cdots 1a_{j}\cdots a_{n}$.
Denote $A=\{1,\ldots ,i\}$, $B=\{i+1,\ldots ,j-1\}$, and $C=\{j,\ldots ,n\}$%
. Let $a_{k}$ be the rightmost entry of 1 such that there are at most $m$ 0s
to the left of this entry and $a_{l}$ be the next after $a_{k}$ entry of 1.
Then there are at most $m$ 0s to the right of $a_{l}$, $l\geq k+2$, and $%
w=a_{1}\cdots a_{k}0\cdots 0a_{l}\cdots a_{n}$.

There are four possible types of
location of $a_{k}$ and $a_{l}$: (i) $k\in A$ and $l\in A$, (ii) $k\in A$
and $l\in B$ (the combination $k\in A$ and $l\in C$ is impossible since all
letters with indices from $B$ are 1s but all letters between $a_{k}$ and $%
a_{l}$ are 0s), (iii) $k\in B$ and $l\in C$ (the combination $k\in B$ and $%
l\in B$ is impossible since all letters with indices from $B$ are 1s but all
letters between $a_{k}$ and $a_{l}$ are 0s), and (iv) $k\in C$ and $l\in C$. We now consider cases (i)--(iv) in detail.

(i) Let $k\in A$ and $l\in A$. Then $w=a_{1}\cdots a_{k}0\cdots 0a_{l}\cdots
a_{i}1\cdots 1a_{j}\cdots a_{n}$. Denote $w_{1}=a_{1}\cdots a_{k}$, $%
w_{2}=a_{l}\cdots a_{i}$, and $w_{3}=a_{j}\cdots a_{n}$. The length of $w_{1}
$ is at most $2m$ since from the left of $a_{k}$ there are at most $m$ 0s
and from the left of $a_{i}$ there are at most $m$ 1s. We can prove in a
similar way that $|w_{2}|\leq 2m$ and $|w_{3}|\leq 2m$. Therefore $w$ can be
represented in the form (\ref{E}).

(ii) Let $k\in A$ and $l\in B$. Then $l=i+1$ and $$w=a_{1}\cdots a_{k}0\cdots
0a_{i}a_{i+1}1\cdots 1a_{j}\cdots a_{n},$$ where $a_{i}=0$ and $a_{i+1}=1$.
Denote $w_{1}=a_{1}\cdots a_{k}$ and $w_{3}=a_{j}\cdots a_{n}$. It is easy
to show that $|w_{1}|\leq 2m$ and $|w_{3}|\leq 2m$. Therefore $w$ can be
represented in the form (\ref{E}).

(iii) Let $k\in B$ and $l\in C$. Then $k=j-1$ and $$w=a_{1}\cdots
a_{i}1\cdots 1a_{j-1}a_{j}0\cdots 0a_{l}\cdots a_{n},$$ where $a_{j-1}=1$ and
$a_{j}=0$. Denote $w_{1}=a_{1}\cdots a_{i}$ and $w_{3}=a_{l}\cdots a_{n}$.
It is easy to show that $|w_{1}|\leq 2m$ and $|w_{3}|\leq 2m$. Therefore $w$
can be represented in the form (\ref{E}).

(iv) Let $k\in C$ and $l\in C$. Then $w=a_{1}\cdots a_{i}1\cdots
1a_{j}\cdots a_{k}0\cdots 0a_{l}\cdots a_{n}$. Denote $w_{1}=a_{1}\cdots
a_{i}$, $w_{2}=a_{j}\cdots a_{k}$, and $w_{3}=a_{l}\cdots a_{n}$. It is easy
to show that $|w_{1}|\leq 2m$, $|w_{2}|\leq 2m$, and $|w_{3}|\leq 2m$.
Therefore $w$ can be represented in the form (\ref{E}).
\end{proof}

\begin{lemma}
\label{L2} Let $L$ be a binary subword-closed language for which $%
Hom(L)<\infty $ and $Het(L)<\infty $. Then there exists natural $p$ such
that $|L(n)|\leq p$ for any natural $n$.
\end{lemma}

\begin{proof}
Denote $m=\max (Hom(L),Het(L))$. Then the words $0^{m+1}1^{m+1}$ and $1^{m+1}0^{m+1}$ do not belong to $L$. Using Lemma \ref{L1}, we obtain that each word $w$ from $L$ can be
represented in the form $w_{1}a^{i}w_{2}\bar{a}^{j}w_{3}$, where $a\in E$,
the length of $w_{k}$ is at most $t=2m$ for $k=1,2,3$, $i,j\in \omega $, and $%
i\leq m$ or $j\leq m$. We now evaluate the number of such words, which
length is equal to $n$. Let $k\in \{1,2,3\}$. Then the number of different
words $w_{k}$ is at most $2^{0}+2^{1}+\cdots +2^{t}<2^{t+1}$. Let us assume that the words $w_1$, $w_2$, and $w_3$ are fixed and $|w_1|+|w_2|+|w_3| \le n$. Then the number of
different words $a^{i}\bar{a}^{j}$ of the length $n-|w_1|-|w_2|-|w_3|$ is at most $4(m+1)$ since $i\leq m$ or $%
j\leq m$. Thus, the number of words in $L(n)$ is at most $p=2^{3t+3}(2t+4)$.
\end{proof}

 \begin{proof}[Proof of Theorem \protect\ref{T1}]
It is clear that $h_{L}^{ra}(n)\leq h_{L}^{rd}(n)$ for any
natural $n$.

(a) Let $Hom(L)=\infty $ and $n$ be a natural number. Then there exists $%
a\in E$ such that $a^{n}\bar{a}a^{n}\in L$. Therefore $a^{n},a^{i}\bar{a}%
a^{n-i-1}\in L(n)$ for $i=0,\ldots ,n-1$. Let $\Gamma $ be a decision tree
over $L(n)$, which solves the problem of recognition for $L(n)$
nondeterministically and has the minimum depth $h_{L}^{ra}(n)$, and $\xi $
be a complete path in $\Gamma $ such that $a^{n}\in E(n,\xi )$. Let us
assume that there is $i\in \{0,\ldots ,n-1\}$ such that the attribute $%
l_{i+1}^n$ is not attached to any node of $\xi $, which is not the root nor
the terminal node. Then $a^{i}\bar{a}a^{n-i-1}\in E(n,\xi )$, which is
impossible. Therefore $h(\Gamma )\geq n$ and $h_{L}^{ra}(n)\geq n$. It is
easy to show that $h_{L}^{rd}(n)\leq n$. Thus, $%
h_{L}^{ra}(n)=h_{L}^{rd}(n)=n $ for any natural $n$.

(b) Let $Hom(L)<\infty $ and $Het(L)=\infty $. By Lemma \ref{L1}, each word
from $L$ can be represented in the form $w_{1}a^{i}w_{2}\bar{a}^{j}w_{3}$%
, where $a\in E$, the length of $w_{k}$ is at most $t=2Hom(L)$ for $k=1,2,3$%
, and $i,j\in \omega $. Note that either $w_{2}=\lambda$ or $w_{2}$ is a word of the kind
 $\bar{a}\cdots a$.

Let $n$ be a natural number such that $n\geq 10t$.  We now describe the work of a decision tree
over $L(n)$, which solves the problem of recognition for $L(n)$
deterministically.
Let $w\in L(n)$. We represent this word as follows: $%
w=L_{1}L_{2}L_{3}AR_{3}R_{2}R_{1}$, where the length of each word $%
L_{1},L_{2},L_{3},R_{3},R_{2},R_{1}$ is equal to $t$. First, we recognize
all letters in the words $L_{1},L_{2},R_{2},R_{1}$ using $4t$ queries (attributes). We now consider four cases.

(i) Let $L_{2}=R_{2}=a^{t}$ for some $a\in E$. Then $L_{3}AR_{3}=a^{n-4t}$
and the word $w$ is recognized.

(ii) Let $L_{2}=a^{t}$ for some $a\in E$ and $R_{2}$ contain both 0 and 1.
Then $R_{2}$ has an intersection with the word $w_{2}$. It is clear that $w_2$ has no intersection with the word $A$ and  $L_{3}A=a^{n-5t}$. We recognize all
letters of the word $R_{3}$. As a result, the word $w$ will be recognized.

(iii) Let $R_{2}=a^{t}$ for some $a\in E$ and $L_{2}$ contain both 0 and 1.
Then $L_{2}$ has an intersection with the word $w_{2}$. It is clear that $w_2$ has no intersection with the word $A$ and  $AR_{3}=a^{n-5t}$. We recognize all
letters of the word $L_{3}$. As a result, the word $w$ will be recognized.

(iv) Let $L_{2}=a^{t}$ and $R_{2}=\bar{a}^{t}$ for some $a\in E$. Then we
need to recognize the position of the word $w_{2}$ and the word $w_{2}$ itself.
Beginning with the left, we divide $L_{3}AR_{3}$ and, probably, a prefix of $%
R_{2}$ into blocks of the length $t$. As a result, we  have $k\leq n/t$
blocks. We recognize all letters in the block with number $r=\left\lceil
k/2\right\rceil $. If all letters in this block are equal to $\bar{a}$, then we
apply the same procedure to the blocks with numbers $1,\ldots ,r-1$. If all
letters in this block are equal to $a$, then we  apply the same
procedure to the blocks with numbers $r+1,\ldots ,k$. If the considered
block contains both 0 and 1, then we recognize $t$ letters before this block
and $t$ letters after this block and, as a result, recognize both the word $%
w_{2}$ and its position. After each iteration, the number of blocks is at
most one half of the previous number of blocks. Let $q$ be the whole number of iterations. Then after the iteration $q-1$ we have at least one unchecked block. Therefore $k/2^{q-1}\ge 1$ and $q \le \log _{2}k +1$.

In the case (i), to recognize the word $w$ we make $4t$ queries. In the
cases (ii) and (iii), we make $5t$ queries. In the case (iv), we make at
most $t\log _{2}(n/t) +7t$ queries. As a result,
we have $h_{L}^{rd}(n)=O(\log n)$.

Since $Het(L)=\infty $, for any natural $n$, the set $L(n)$ contains for
some $a\in E$ words $a^{i}\bar{a}^{n-i}$ for $i=0,\ldots ,n$. Then $%
|L(n)|\geq n+1$, and each decision tree $\Gamma $ over $L(n)$ solving the
problem of recognition for $L(n)$ deterministically has at least $n+1
$ terminal nodes. One can show that the number of terminal nodes in $\Gamma $
is at most $2^{h(\Gamma )}$. Therefore $h(\Gamma )\geq \log _{2}(n+1)$. Thus
$h_{L}^{rd}(n)=\Omega (\log n)$ and $h_{L}^{rd}(n)=\Theta (\log n)$.

We now prove that $h_{L}^{ra}(n)=O(1)$. To this end, it is enough to
show that there is a natural number $c$ such that, for each natural $n$ and
for each word $w\in L(n)$, there exists a subset $B_{w}$ of the set of
attributes $\{l_{1}^{n},\ldots ,l_{n}^{n}\}$ such that $\left\vert
B_{w}\right\vert \leq c$ and, for any word $u\in L(n)$ different from $w$,
there exists an attribute $l_{i}^{n}\in B_{w}$ for which $l_{i}^{n}(w)\neq
l_{i}^{n}(u)$. We now show that as $c$ we can use the number $7t$. In
the case (i), in the capacity of the set $B_{w}$ we can choose all
attributes corresponding to $4t$ letters from the subwords $L_{1}$, $L_{2}$,
$R_{2}$, and $R_{1}$. In the case (ii), we can choose all attributes
corresponding to $5t$ letters from the subwords $L_{1}$, $L_{2}$, $R_{3}$, $%
R_{2}$, and $R_{1}$. In the case (iii), we can choose all attributes
corresponding to $5t$ letters from the subwords $L_{1}$, $L_{2}$, $L_{3}$, $%
R_{2}$, and $R_{1}$. In the case (iv), in the capacity of the set $B_{w}$ we
can choose all attributes corresponding to $4t$ letters from the subwords $%
L_{1}$, $L_{2}$, $R_{2}$, and $R_{1}$, and $3t$ letters from the block containing
both 0 and 1 and from the blocks that are its left and right neighbors.

(c) Let $Hom(L)<\infty $ and $Het(L)<\infty $. By Lemma \ref{L2}, there
exists natural $p$ such that $|L(n)|\leq p$ for any natural $n$. Let $n$ be
a natural number. Then the set $L(n)$ contains at most $p$ words, and there
exists a subset $B$ of the set of attributes $\{l_{1}^{n},\ldots ,l_{n}^{n}\}
$ such that $\left\vert B\right\vert \leq p^2$ and, for any two different
words $u,w\in L(n)$, there exists an attribute $l_{i}^{n}\in B$ for which $%
l_{i}^{n}(w)\neq l_{i}^{n}(u)$. It is easy to construct a decision tree over
$L(n)$ which solves the problem of recognition for $L(n)$ deterministically by sequential
computing attributes from $B$. The depth of this tree is at most $p^{2}$.
Therefore $h_{L}^{rd}(n)=O(1)$ and $h_{L}^{ra}(n)=O(1)$. 
\end{proof}

\begin{proof}[Proof of Theorem \protect\ref{T2}]
It is clear that $h_{L}^{ma}(n)\leq h_{L}^{md}(n)$ for any
natural $n$.

(a) Let $|L|=\infty $, $L^{C}\neq \emptyset $, and $w_{0}$ be a word with
the minimum length from $L^{C}$. Since $|L|=\infty $, $L(n)\neq \emptyset $ for any natural $n$. Let $n$ be a
natural number such that $n>|w_{0}|$ and $\Gamma $ be a decision
tree over $L(n)$ that solves the problem of membership for $L(n)$
nondeterministically and has the minimum depth. Let $w\in L(n)$ and $\xi $
be a complete path in $\Gamma $ such that $w\in E(n,\xi )$. Then the terminal
node of $\xi $ is labeled with the number $1$. Let us assume that the number
of nodes labeled with attributes in $\xi $ is at most $n-|w_{0}|$. Then we
can change at most $|w_{0}|$ letters in the word $w$ such that the
obtained word $w^{\prime }$ will satisfy the following conditions: $w_{0}$
is a subword of $w^{\prime }$ and $w^{\prime }\in $ $E(n,\xi )$. However it
is impossible since in this case $w^{\prime }\notin L(n)$ and $w^{\prime }\in $
$E(n,\xi )$ but the terminal node of $\xi $ is labeled with the number 1.
Therefore the depth of $\Gamma $ is greater than $n-|w_{0}|$. Thus $%
h_{L}^{ma}(n)=\Omega (n)$. It is easy to construct a decision tree over $L(n)
$ that solves the problem of membership for $L(n)$ deterministically and has
the depth equals to $n$. Therefore $h_{L}^{md}(n)=O(n)$. Thus, $%
h_{L}^{md}(n)=\Theta (n)$ and $h_{L}^{ma}(n)=\Theta (n)$.

(b) Let  $|L|<\infty $. Then there exists natural $m$ such that $%
L(n)=\emptyset $ for any natural $n\geq m$. Therefore, for each natural $%
n\geq m$, $h_{L}^{md}(n)=0$ and $h_{L}^{ma}(n)=0$.

Let $L^{C}=\emptyset $, $n$ be a natural number, and $\Gamma $ be a decision
tree over $L(n)$ which consists of the root, a terminal node labeled with $1,$
and an edge that leaves the root and enters the terminal node. One can show
that $\Gamma $ solves the problem of membership for $L(n)$ deterministically
and has the depth equals to $0$. Therefore $h_{L}^{md}(n)=0$ and $%
h_{L}^{ma}(n)=0$. 
\end{proof}

\subsection*{Acknowledgments}

Research reported in this publication was supported by King Abdullah
University of Science and Technology (KAUST).

\bibliographystyle{spmpsci}
\bibliography{bsc-languages}

\end{document}